\theoremstyle{plain}
\newtheorem{theorem}{Theorem}
\newtheorem{lemma}[theorem]{Lemma}
\theoremstyle{definition}
\newtheorem{problem}{Problem}
\theoremstyle{remark}
\def\M{\mathcal{M}}
\begin{document}

\begin{frontmatter}

\title{Algorithms for Anti-Powers in Strings}

\author{Golnaz Badkobeh} 
\ead{golnaz.badkobeh@gmail.com}
\address{Department of Computer Science, University of Warwick, Warwick, UK}

\author{Gabriele Fici} 
\ead{gabriele.fici@unipa.it}
\address{Dipartimento di Matematica e Informatica, Universit\`a di Palermo, Italy}

\author{Simon J. Puglisi} 
\ead{puglisi@cs.helsinki.fi}
\address{
Helsinki Institute for Information Technology,\\
Department of Computer Science, University of Helsinki, Helsinki, Finland}

\sloppy  
  
\begin{abstract}
A string $S[1,n]$ is a power (or tandem repeat) of order $k$ and period $n/k$ if it can decomposed into $k$ consecutive equal-length blocks of letters. Powers and periods are fundamental to string processing, and algorithms for their efficient computation have wide application and are heavily studied. Recently, Fici et al. (Proc. ICALP 2016) defined an {\em anti-power} of order $k$ to be a string composed of $k$ pairwise-distinct blocks of the same length ($n/k$, called  {\em anti-period}). Anti-powers are a natural converse to powers, and are objects of combinatorial interest in their own right. In this paper we initiate the algorithmic study of anti-powers. Given a string $S$, we describe an optimal algorithm for locating all substrings of $S$ that are anti-powers of a specified order. The optimality of the algorithm follows form a combinatorial lemma that provides a lower bound on the number of distinct anti-powers of a given order: we prove that a string of length $n$ can contain $\Theta(n^2/k)$ distinct anti-powers of order $k$.
\end{abstract}

\begin{keyword}  Anti-powers, Combinatorial algorithms, Combinatorics on Words.
\end{keyword}

\end{frontmatter}


\section{Introduction}

A vast literature exists on algorithms for locating regularities in strings. One of the most natural notions of regularity is that of an exact repetition (also called power or tandem repeat), that is, a substring formed by two or more contiguous identical blocks --- the number of these identical blocks is called the \emph{order} of the repetition. 
Often, the efficiency of such algorithms derives from combinatorial results on the structure of the strings.
The reader is pointed to \cite{Ba15} for a survey on combinatorial results about text redundancies and algorithms for locating them. 

Recently, a new notion of regularity for strings based on diversity rather than on equality has been introduced: an \emph{anti-power}of order $k$ \cite{FRSZ16} (see~\cite{Fi18} for the extended version) is a string that can be decomposed into $k$ pairwise-distinct strings of identical length. This new notion is at the basis of a new unavoidable property. Indeed, regardless of the alphabet size, every infinite string must contain powers of any order or anti-powers of any order~\cite{FRSZ16,Fi18}. Defant~\cite{Def17} (see also Narayanan~\cite{N17}) studied the sequence of lengths of the shortest prefixes of the Thue-Morse word that are $k$-anti-powers, and proved that this sequence grows linearly in $k$.

In this paper, we focus on the problem of finding efficient algorithms to locate anti-powers in a finite string. While there exist several algorithms for locating repetitions in strings (see for example \cite{Cr09}), we present here the first algorithm that locates anti-power substrings in a given input string. Furthermore, we exhibit a lower bound on the number of distinct substrings that are anti-powers of a specified order, which allows us to prove that our algorithm time complexity is optimal.

\section{Preliminaries}

Let $S = S[1..n]$ be a string of length $|S|=n$ over an alphabet $\Sigma$ of 
size $|\Sigma|=\sigma$. 
The empty string $\varepsilon$ is the string of length $0$.
For $1 \leq i \le j \leq n$,
$S[i]$ denotes the $i$th symbol of $S$,
and $S[i..j]$ the contiguous sequence of symbols (called {\em factor} or {\em substring}) 
$S[i]S[i+1]\ldots S[j]$. 
A substring $S[i..j]$ is a suffix of $S$ if $j=n$ and it is a prefix of $S$ if $i=1$.
A {\em power of order $k$} (or {\em $k$-power}) is a string that is the concatenation of $k$ identical strings.
An {\em anti-power} of order $k$ (or {\em $k$-anti-power}) is a string that can be decomposed into $k$ pairwise-distinct strings of identical length \cite{FRSZ16}. The {\em period} of a $k$-power (resp.~the  {\em anti-period} of a $k$-anti-power) of length $n$ is the integer $n/k$.

For example, $S=aabaab$ is a $2$-power (also called a {\em square}) of period $3$, while $S=abcaba$ is a $3$-anti-power of anti-period $2$ (but also a $2$-anti-power of anti-period $3$).

In this paper, we consider the following problem:

\begin{problem}\label{Pm1}
Given a string $S$ and an integer $k>1$, locate all the substrings of $S$ that are anti-powers of order $k$.
\end{problem}

We describe an optimal solution to this problem in Section~\ref{sec:alg}. Before that, in Section~\ref{sec:lowerbound}, 
we prove a lower bound on the number of anti-powers of order $k$ that can be present in a string of length $n$, which allows
us to establish the optimality of our algorithm.

\section{Lower Bound on the Number of Anti-Powers}\label{sec:lowerbound}

Over an unbounded alphabet, it is easy to see that a string of length $n$ can contain $\Omega(n^2/k)$ anti-powers of order $k$ (think of a string consisting of all-distinct letters). However, somewhat more surprisingly, this bound also holds over a finite alphabet, as we now show. 

For every positive integer $m$, we let $w_m$ denote the string obtained by concatenating the binary expansions of integers from $0$ to $m$ followed by a symbol $\$$. So for example $w_5=0\$1\$10\$11\$100\$101\$$. We have that $|w_m|=\Theta(m\log m)$. Let us write $n=|w_m|$. 

\begin{lemma}
Every string $w_m$ of length $n$ contains $\Omega(\frac{n^2}{k})$ anti-powers of order $k$.
\end{lemma}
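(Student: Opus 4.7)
The plan is to restrict attention to the suffix $w'$ of $w_m$ consisting of $N_{2^{L-1}}\$ N_{2^{L-1}+1}\$\cdots N_m \$$, where $L=\lfloor \log_2 m\rfloor$ and $N_j$ denotes the $L$-bit binary expansion of $j$. Since there are at least $m/2$ such values of $j$, each contributing $L+1$ characters, $|w'|=\Theta(mL)=\Theta(n)$; moreover, the $\$$ symbols inside $w'$ occur at exact period $L+1$. All anti-powers I count will lie entirely inside $w'$, and are therefore substrings of $w_m$.

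The crux of the argument is the following structural claim: for every $p\ge 2(L+1)$, any two length-$p$ substrings of $w'$ starting at different positions are distinct strings. Indeed, because the $\$$ symbols in $w'$ are spaced exactly $L+1$ apart, any length-$p$ window of $w'$ contains at least two of them, and any two consecutive in-window $\$$'s enclose a full $L$-bit block $N_j$. If two length-$p$ windows at positions $i\ne i'$ coincided as strings, their $\$$'s would have to lie at the same internal positions, forcing $i\equiv i'\pmod{L+1}$; then the complete $N_j$ sitting between two consecutive internal $\$$'s of each window would have to match, and since each $N_j$ appears exactly once in $w_m$, this yields $i=i'$, a contradiction.

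Given this claim, the count is a routine estimate. Fix any $p$ with $2(L+1)\le p \le |w'|/(2k)$: every length-$kp$ substring of $w'$ has $k$ consecutive length-$p$ blocks starting at distinct positions of $w'$, which by the claim are pairwise distinct, so the substring is a $k$-anti-power of anti-period $p$. There are at least $|w'|/2$ such length-$kp$ substrings, and they are themselves pairwise distinct (apply the claim to their first length-$p$ block). Summing over the $\Theta(|w'|/k)=\Theta(n/k)$ admissible values of $p$, and noting that different values of $p$ yield anti-powers of different total lengths, one obtains $\Omega(n/k)\cdot \Omega(n) = \Omega(n^2/k)$ distinct $k$-anti-powers in $w_m$, as required.

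I expect the main obstacle to be the structural claim: it requires carefully tracking the phase $i\bmod (L+1)$ of each window's starting position and a uniform use of the fact that every binary expansion $N_j$ occurs exactly once in $w_m$. Once that is in place, both the anti-power property and the pairwise-distinctness of the anti-powers fall out of the same mechanism, and the counting step is immediate.
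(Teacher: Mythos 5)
Your argument is essentially the paper's: both rest on the observation that any sufficiently long window of $w_m$ contains a complete \$-delimited binary block and is therefore a unique substring of $w_m$, so every position begins a $k$-anti-power for every large enough anti-period $p$, and summing over $p$ up to about $n/k$ gives $\Omega(n^2/k)$. The only slip is in your choice of $w'$: with $L=\lfloor\log_2 m\rfloor$ the integers $j\in[2^L,m]$ require $L+1$ bits, so the \$'s in your $w'$ are not exactly $(L+1)$-periodic as claimed; restricting to $j\in[2^{L-1},2^L-1]$ (still $\Theta(m)$ blocks, all of length exactly $L$) repairs this, whereas the paper sidesteps the issue by working with all of $w_m$ and simply taking the threshold $p>3+2\lceil\log_2 m\rceil$ large enough that every window contains two \$'s.
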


\begin{proof}
As mentioned before, we have $n=\Theta(m \log m)$. Let $AP(k,p)$ denote the number of anti-powers of order $k$ in $w_m$ with anti-period $p$.

The number of anti-powers of order $k$ is at least the sum of the number of anti-powers of order $k$ with anti-period greater than $3+2\lceil \log_2 m \rceil$. It is readily verified that if the anti-period $p$ is such that $p >3+2\lceil \log_2 m\rceil$ then at every position $i< n - pk$ in $w_m$ there is a $k$-anti-power of anti-period $p$. This is because there are at least two $\$$'s in every factor of $w_m$ of length $p>3+2\lceil \log_2 m\rceil$, 
and every factor of $w_m$ containing at least two $\$$'s 
has, by construction, only one occurrence in $w_m$.

Hence we have:
\begin{eqnarray*}
 \sum_{p>3+2\lceil \log_2 m \rceil}^{n/k} AP(k,p)
 &\geq& \sum_{p>3+2\lceil \log_2 m \rceil}^{n/k} (n-kp) \\
  &=& n\left(\frac{n}{k}-3-2\lceil \log_2 m \rceil\right)-k\left(\sum_{p=1}^{n/k}p-\sum_{p=1}^{3+2\lceil \log_2 m \rceil}p\right) \\
    &\geq & \frac{n^2}{k}
    -3n-2n\lceil \log_2 m \rceil
    -k\sum_{p=1}^{n/k}p \\
  &= & \frac{n^2}{k}-\frac{k}{2}\left(\frac{n}{k}\left(\frac{n}{k}+1\right)\right)-3n-2n\lceil \log_2 m \rceil \\
  &=& \frac{n^2}{k}-\frac{n^2}{2k}-\frac{n}{2}-3n-2n\lceil \log_2 m \rceil \\
  &=& \frac{n^2}{2k}-\frac{7n}{2}-2n\lceil \log_2 m \rceil.
 \end{eqnarray*}
 
 Thus we have $\sum_{p>3+2\lceil \log_2 m \rceil}^{n/k}  AP(k,p) = \Omega(\frac{n^2}{k})$, as claimed.
\end{proof}


\section{Computing Anti-Powers of Order $k$}\label{sec:alg}

This section is devoted to establishing the following theorem and we assume $S$ is over an alphabet $\Sigma=[n]$.

\begin{theorem}\label{Thm1}
Given a string $S[1,n]$ and an integer $k>1$, the locations of all substrings of $S$ that are $k$-anti-powers can be determined in $O(n^2/k$) time and $O(n)$ space.
\end{theorem}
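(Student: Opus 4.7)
The plan is to test every candidate anti-power directly: for each anti-period $p \in \{1, \ldots, \lfloor n/k \rfloor\}$ and each starting position $i$ with $i + kp - 1 \leq n$, decide whether $S[i..i+kp-1]$ is a $k$-anti-power. There are $\Theta(n^2/k)$ such pairs $(p,i)$, matching the target complexity, so we need amortized $O(1)$ work per pair. The organizing idea is to group the work first by $p$ and then by the residue $r = (i-1) \bmod p$: fixing $p$ and $r$ partitions the suffix of $S$ starting at position $r+1$ into consecutive length-$p$ blocks, and a candidate $k$-anti-power at a position $i \equiv r+1 \pmod{p}$ corresponds exactly to a window of $k$ consecutive blocks in this partition.

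As preprocessing, build the suffix array $\SA$, inverse suffix array $\ISA$, and LCP array in $O(n)$ time and space (the integer-alphabet assumption $\Sigma = [n]$ permits linear-time construction), together with an $O(n)$-space range-minimum structure on the LCP array so that the longest common prefix of any two suffixes is available in $O(1)$ time. Then, for each $p$ separately, assign integer identifiers to all positions $i \in [1, n-p+1]$ so that $\mathsf{id}[i] = \mathsf{id}[j]$ iff $S[i..i+p-1] = S[j..j+p-1]$; a single left-to-right scan of $\SA$ achieves this in $O(n)$ time, issuing a new identifier whenever the LCP between two consecutive suffixes in $\SA$ drops below $p$.

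With $\mathsf{id}[\cdot]$ in hand, for each offset $r \in \{0, \ldots, p-1\}$ read off the identifier sequence at positions $r+1, r+p+1, r+2p+1, \ldots$ and apply the standard distinct-elements sliding window: maintain a frequency array $f[\cdot]$ and a counter $d$ of identifiers currently appearing in the window with $f > 0$, updating in $O(1)$ per shift. Whenever $d = k$, report $r + 1 + jp$ as the starting position of a $k$-anti-power of anti-period $p$, where $j$ indexes the first block of the window. Resetting $f$ between offsets is done by touching only the identifiers that actually entered the window, which folds the reset into the shift cost.

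For a fixed $p$, identifier assignment costs $O(n)$ and the $p$ offsets together process $p \cdot O(n/p) = O(n)$ block positions, so each $p$ costs $O(n)$ in total; summing over $p \le \lfloor n/k \rfloor$ gives the claimed $O(n^2/k)$ time within $O(n)$ space. The main point requiring care is precisely this per-$p$ accounting, in particular reusing and lazily zeroing the identifier and frequency arrays between successive values of $p$ and $r$ so the bound is not inflated by repeated initialisation; everything else is a direct application of suffix-array machinery and a textbook sliding window.
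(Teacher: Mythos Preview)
Your proposal is correct and follows essentially the same route as the paper: loop over anti-periods $p\le n/k$, assign integer names to the length-$p$ substrings, split positions by residue modulo $p$, and run a distinct-elements sliding window on each residue class, for $O(n)$ work per $p$ and $O(n^2/k)$ total in $O(n)$ space. The only notable difference is in how the names are obtained---the paper builds them incrementally from the length-$(p-1)$ names via a stable radix sort on the next character, whereas you derive them from a single scan of the suffix/LCP arrays (the RMQ structure you mention is actually unnecessary, since only adjacent LCP values are consulted).
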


In light of the lower bound established in the previous section on the number of anti-powers of a given order $k$ that can occur in a string, this solution to Problem~\ref{Pm1} is optimal.

\subsection{Computing anti-powers having anti-period $p=1$}

We begin with a lemma that we will use in our algorithm.

\begin{lemma}\label{lemma-alg-distinct-letters}
Given a string $S[1..n]$, the longest substring of $S$ that consists of pairwise-distinct symbols 
can be computed in $O(n)$ time and space.
\end{lemma}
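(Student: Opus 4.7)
The plan is to use a classical sliding-window (two-pointer) scan. Since the alphabet is $\Sigma=[n]$, allocate an array $\mathsf{last}[1..n]$ of integers, initialized to $0$, where $\mathsf{last}[c]$ will store the largest position seen so far at which the symbol $c$ occurs. Maintain two indices $l \le r$, together with a running best $(\mathit{len}^*, l^*)$, and keep the invariant that $S[l..r]$ consists of pairwise-distinct symbols.

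I would process positions $r = 1, 2, \dots, n$ in order. When $r$ advances to the next position, inspect $\mathsf{last}[S[r]]$: if $\mathsf{last}[S[r]] \ge l$, then $S[r]$ already appears in the current window at position $\mathsf{last}[S[r]]$, so I set $l \leftarrow \mathsf{last}[S[r]] + 1$ to evict it; otherwise $l$ stays put. Then I update $\mathsf{last}[S[r]] \leftarrow r$ and, if $r-l+1 > \mathit{len}^*$, update $(\mathit{len}^*, l^*) \leftarrow (r-l+1, l)$. At the end, $S[l^*..l^*+\mathit{len}^*-1]$ is reported.

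For correctness I would argue by induction on $r$ that after processing position $r$ the window $S[l..r]$ is the longest suffix of $S[1..r]$ whose symbols are pairwise distinct: (i) the update to $l$ clearly restores the distinctness invariant, since after the shift every character in $S[l..r]$ occurs exactly once; (ii) no larger left endpoint works, because any window ending at $r$ that is strictly longer would have to include the previous occurrence of some symbol, contradicting distinctness. Since every left endpoint of a maximum-length distinct window is the left endpoint of the longest distinct window ending at some $r$, taking the best over all $r$ yields a global maximum.

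For the complexity I would note that $r$ advances at most $n$ times and $l$ is monotonically non-decreasing, also bounded by $n$, so the total amortized work is $O(n)$. Each step does $O(1)$ arithmetic, array look-ups and updates. The space is $O(n)$ for the $\mathsf{last}$ array (recall $|\Sigma|=n$) plus $O(1)$ for the pointers and the running maximum. The main point to be careful with is the initialization of $\mathsf{last}$: to keep the total space bound honest, allocate it once as an array of length $n$ and rely on the fact that the comparison $\mathsf{last}[S[r]] \ge l$ uses only entries written after step $l-1$, so initial zeros cause no spurious matches since $l \ge 1$ throughout.
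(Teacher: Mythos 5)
Your proposal is correct and follows essentially the same approach as the paper's own proof: a left-to-right two-pointer scan with an array of last-occurrence positions, advancing the left pointer past the previous occurrence whenever a duplicate enters the window. The only cosmetic difference is that you update the running maximum at every step rather than only when the invariant breaks, which if anything is a slightly cleaner bookkeeping choice.
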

\begin{proof}
We scan $S$ left to right, and maintain two pointers $x \le y$ into it. Through the scan,
both $x$ and $y$ are monotonically nondecreasing. We maintain the invariant that the symbols
in the substring delineated by $x$ and $y$, i.e., $S[x,y]$, are all distinct. In order to
maintain this invariant, we keep an array $P[1..\sigma]$, initially all 0s, such that immediately
before we increment $y$, $P[c] < y$ is the rightmost position of symbol $c$ in $S[1..y]$ (or
0 if $c$ does not appear in $S[1..y]$). Clearly, for the invariant to hold, we must have that
$P[S[y]] < x$, otherwise there are (at least) two occurrences of $S[y]$ in $S[x..y]$. In other
words, if $S[x..y]$ contains distinct letters then so will $S[x..y+1]$, provided $P[S[y+1]] < x$.
Initially, $x = y = 1$ and the invariant holds. We increment $y$ until $P[S[y]] > x$, at which point
we know that the symbols of $S[x..y-1]$ were distinct. If $S[x..y-1]$ is the length of the longest
such substring we have seen so far, we record $x$ and $y-1$. We then restore the invariant by setting
$x = P[S[y]]+1$, which has the effect of dropping the left occurrence of the repeated symbol $P[S[y]]$,
so that $S[x,y]$ again contains distinct symbols. The runtime is clearly linear in $n$. The only
non-constant space usage is for $P$.
\end{proof}

Obviously, the above algorithm can be used to efficiently compute $k$-anti-powers having anti-period 1. 
We will use it as a building block for finding $k$-anti-powers of all anti-periods.

\subsection{Optimal algorithm for computing anti-powers}

Let us now describe our algorithm. 
Firstly, observe that the maximum anti-period of a $k$-anti-power within $S$ is $p_{\mbox{\scriptsize max}} = n/k$.
Our algorithm works in $p_{\mbox{\scriptsize max}}$ rounds, $p = 1..p_{\mbox{\scriptsize max}}$.
In a generic round $p$ we will determine if $S$ contains (as a substring) a $k$-anti-power of anti-period $p$.
Let $M_{i,p}$ be an integer name for substring $S[i..i+p]$ 
amongst all substrings of length $p$ in $S$ --- two substrings $S[i..i+p]$ and $S[j..j+p]$ have the same name 
if and only if the substrings are equal. Note that the number of names for any substring length $p$ is bounded above
by $n$, the length of the string.
We can determine a suitable $M_{i,p}$ for all $i$ and $p$ 
in linear time from the names of substrings of length $p-1$ as follows.
We create an array of $n$ pairs, $(i,m)$, one for each position $i$ in the string. 
Initially, $m=0$ for all pairs.
In round $p=0..n/k$, we are computing the names of the substrings of length $p+1$.
We stably radix sort the pairs in $O(n)$ time using $S[i+p]$ as the sort
key for pair $(i,m)$. We then scan the sorted list of pairs, and for every run of adjacent pairs 
for which both $m$ and $S[i+p]$ are equal, we assign them the same new name $m'$, overwriting 
their $m$ fields. After this scan, clearly only substrings $S[i+p]$ and $S[j+p]$ of length $p$ that 
are equal will have the same name because they had the same $(p-1)th$ name and their last letters 
($S[i+p]$ and $S[j+p]$) are equal. We can now assign $M_{i,p}$ by scanning the list of pairs again 
and for each pair $(i,m)$ encountered setting $M_{i,p} \leftarrow m$.  

To find a $k$-anti-power of anti-period $p$, we must find a set of distinct $k$ substrings of length $p$,
whose starting positions are spaced exactly $p$ positions apart and so are all equal modulo $p$.

Let $X_r$ be the set of positions in $S$ that are equal to $r$ modulo $p$, i.e.,
$r = i\bmod p\ \forall i \in X_r$.

Let $\M^p_r$ be the string of length $|X_r| = \lceil n/p \rceil$ formed by concatenating
the $M_{i,p}$ values (in increasing order of $i$) for which $i \in X_r$. We can form
$\M^p_r$ in $O(n/p)$ time by visiting each $i \in X_r$ and computing $M_{i,p}$ in constant
time. As observed above, any substring of length $k$ in
$\M^p_r$ that contains all-distinct letters corresponds to a $k$-anti-power. In particular,
if $\M^p_r[i..i+k-1]$ is made up of distinct letters, then $S[(i-1)p+r..(k+i-1)p+r-1]$ is a $k$-anti-power.

Thus, in round $p$ of our algorithm we compute $\M^p_r$ for each $r = 1..p$. The total space and time required is
$O(n)$. We then scan each of these $\M^p_r$ strings in turn and detect substrings of length
$k$ containing distinct letters, using the algorithm in the proof of Lemma~\ref{lemma-alg-distinct-letters}.
This process is denoted by function {\sc Distinct}, in Line \ref{algo:6} of our Algorithm.  
Function {\sc Distinct} outputs a set of starting and ending positions of $k$-anti-powers whose anti-periods are $p$ and starting positions $i \mod p$.
The time required to scan each $\M^p_r$ string is $O(n/p)$ and so is $O(n)$ in total for
round $p$. The extra space needed for each scan is $O(n)$ for the array of previous positions.

Because each round takes $O(n)$ time, and there are $O(n/k)$ rounds, the total running time
to output all anti-powers of order $k$ is $O(n^2/k)$. Since we can reuse space between
rounds, the total space usage is $O(n)$.

\vspace{0.5cm}
\begin{algo}{{\sc AntiPowers}}{S,k}
\DOFORI{p}{1}{n/k}
\DOFORI{i}{1}{p}
 \SET{S'}{\Call{$\M^p_i$}{S}}
\label{algo:5}
 \SET{AP}{\Call{{\sc Distinct}}{S',k}}
 \label{algo:6}
 \RETURN{AP}
\OD
\OD
\end{algo}

\begin{table}
\begin{center}
\begin{tabular}{*{6}{c|}c}
$p$&1&2&2&3&3&3\\
\hline
$r$&1&1&2&1&2&3\\
\hline
$\M^p_r$&aabababbbabb&133434&22242&1263&245&434\\
\hline
$AP$&$\emptyset$&$\emptyset$&$\emptyset$&(1,9),(4,12)&(2,10)&(3,11)\\
\end{tabular}
\vspace{0.5cm}
\caption{The step-by-step computations performed by Algorithm \Algo{{\sc AntiPowers}} for input $S=aabababbbabb\$$ and $k=3$.}
\end{center}
\end{table}

\section{Conclusions and Open Problems}

The algorithm of the previous section is optimal in the sense that there are strings for which we must spend $\Theta(n^2/k)$
to simply list the antipowers of order $k$ because there are that many of them (as established in Section~\ref{sec:lowerbound}).
One wonders though if an output senstive algorithm is possible, one that takes, say, $O(n + c)$ time, where $c$ is the number 
of antipowers of order $k$ actually present in the input. Alternatively, do conditional lower bounds on antipower computation 
exist?

Many interesting algorithmic problems concerning anti-powers remain. For example, suppose we are to preprocess 
$S$ and build a data structure so that later, given queries of the form $(i, j, k)$, we have to determine quickly 
whether the substring $S[i..j]$ is an anti-power of order $k$. Using suffix trees~\cite{w1973} and weighted ancestor 
queries~\cite{GawrychowskiLN14} it is fairly straightforward to achieve $O(k)$ query time, in $O(n)$ space. Alternatively, 
by storing metastrings for all possible anti-periods, it is not difficult to arrive at a data structure that requires 
$O(n^2)$ space and answers queries in $O(1)$ time. Is it possible to achieve a space-time tradeoff between the extremes 
defined by these two solutions, or even better, to simultaneously achieve the minima of the space and query bounds?

\subparagraph*{Acknowledgements}

Our sincere thanks goes to the anonymous reviewers, whose comments materially improved our initial manuscript.
Golnaz Badkobeh is partially supported by the Leverhulme Trust on the Leverhulme Early Career Scheme. 
Simon J. Puglisi is supported by the Academy of Finland via grant 294143.

\bibliographystyle{plain}
\bibliography{anti}
\end{document}